\documentclass[11pt]{article}

\usepackage[journal=jquant,stage=submission,logging=stdout]{jsty3-author}

\AtBeginDocument{\nolinenumbers}

\usepackage{amsthm}
\usepackage{braket}

\newtheorem{proposition}{Proposition}
\newtheorem{remark}{Remark}
\newtheorem{example}{Example}

\title{Bounds on Entanglement Dynamics from Krylov-Space Spreading}

\author{a,b}{Swati Choudhary}{swatichoudhary@hri.res.in}{}
\author{b,c}{Sukrut Mondkar}{sukrutmondkar@hri.res.in}{}
\author{b,c}{Ujjwal Sen}{ujjwal@hri.res.in}{}

\affiliation{a}{
Center for Quantum Science and Technology (CQST) and
Center for Computational Natural Sciences and Bioinformatics (CCNSB),
International Institute of Information Technology Hyderabad,
Prof. CR Rao Road, Gachibowli, Hyderabad 500032, Telangana, India
}

\affiliation{b}{
Harish-Chandra Research Institute,
Chhatnag Road, Jhunsi, Prayagraj 211019, India
}

\affiliation{c}{
Homi Bhabha National Institute,
Training School Complex, Anushakti Nagar,
Mumbai 400094, India
}

\abstract{
A recently developed notion, known as spread complexity, captures how a quantum state spreads in Krylov space under unitary evolution. Related Krylov-space approaches are increasingly being investigated for applications in quantum control, simulation, and metrology, as well as are being considered for implementation in near-term quantum platforms. However, its connection to fundamental quantum resources such as entanglement and quantum coherence remains unclear.
We show that the dynamics of entanglement is constrained by the extent of spreading in Krylov space. For multipartite systems, we relate state delocalization in the Krylov basis, quantified by the inverse participation ratio, to geometric measures of multipartite quantum correlations. Furthermore, we derive analytical relations between the quantum coherence of the initial state in the energy eigenbasis and spread complexity for qubit and qutrit systems. Our results make progress towards understanding the subtle interplay between the dynamical spreading in Krylov space and fundamental quantum resources.
}

\begin{document}

\section{Introduction}\label{Intro}
Understanding how quantum resources evolve under dynamical processes is an important problem in quantum information science. Quantities such as entanglement~\cite{HHHH2009,GT2009,Das2019} and quantum coherence~\cite{A2006,BCP2014,WY2016,SAP2017} play a fundamental role in enabling quantum advantages in communication, computation, and metrology. A natural question therefore arises: how are these resources constrained and redistributed during unitary quantum evolution? Addressing this question requires tools capable of characterizing not only the presence of quantum correlations, but also the manner in which quantum states spread through 
the state space over time.

Recently, the notion of quantum complexity \cite{W2009,Osborne_2012,FCM2017,BBCCHHH2026} has emerged as a central concept in several areas of physics, including  study of quantum chaos and integrability in many-body quantum systems~\cite{Roberts:2016hpo, Zhuang:2019jyq, Balasubramanian:2019wgd, ABHKMM2020, Yang:2019iav, Haferkamp:2021uxo, BSN2022,HMTW2023,EJX2023, Craps:2023rur,NMBB2024,SJP2024,SRW2024,BMW2025,NS2023,M2024,RGSS2022,RGSS2022CHAOS}. Among the various notions of complexity, a particularly useful one is provided by spread complexity~\cite{BCMW2022} (and its operator analogue, Krylov complexity~\cite{PCASA2019,RGSS2021,CJLPQ2024}), which characterizes the spreading of a quantum state in the Krylov basis generated by the Hamiltonian and the initial state through the Lanczos algorithm~\cite{L1950}. Within this approach, unitary evolution can be mapped onto an effective hopping dynamics along a semi-infinite chain, where the corresponding amplitudes encode the delocalization of the evolving state in Krylov space~\cite{PCASA2019}. Unlike circuit complexity~\cite{N2005,NDGD2006,W2009,BCHKP2021}, which depends on a choice of elementary gates,
 spread complexity provides an intrinsic characterization of Hamiltonian-driven quantum evolution and has been extensively studied in the context of  many-body dynamics, integrability, and quantum chaos~\cite{BDKP2020,BDKLP2021,BMS2021,AB2022,CMP2022,BGN2023,CEP2024,G2024,CHJJKN2024,NMMDC2025,SSN2025,SV2025}. Also,~\cite{BNNS2022,BNNS2023,M2023} extends Krylov formalism to Lindblad/open systems.

{Krylov-space complexity measures are also beginning to find relevance in quantum technologies beyond the theoretical study of quantum dynamics. Recent works have used them to design efficient control protocols for analog quantum simulators~\cite{ZTZL2026}, evaluate quantum-reservoir performance~\cite{DBSRCW2024}, understand charging advantages in quantum batteries~\cite{KMOR2026}, and identify Trotterization-induced changes in digital quantum simulations~\cite{SMC2025}.}
{Spread complexity has further been shown to be related to the minimal cost of synthesizing target states using unitary gates and beam-splitting operations in~\cite{BGMNZ2026}, and experimentally feasible schemes for measuring Krylov complexity on near-term quantum platforms, including superconducting qubits, Rydberg arrays, and trapped ions have been proposed~\cite{XZQL2025}.} {
Krylov-subspace techniques have further been combined with shadow tomography to obtain systematically improvable estimates of the quantum Fisher information~\cite{ZT2025,WZ2026,LW2026} (see also~\cite{CNB2025,ATV2026}).
}
{
Complementing these theoretical developments, Krylov-subspace diagonalization methods have already been implemented on superconducting quantum processors~\cite{Y2025, Kirby2026}.
Moreover, relations between Lanczos coefficients and experimentally accessible quantum-work statistics provide an indirect route to probing quantities governing Krylov-space spreading~\cite{PPGS2024,GPPS2024,BSM2024,An2015,CM2017}.
Taken together, these developments suggest that Krylov-space methods can provide useful tools for quantum control, simulation, machine learning, metrology, sensing, and energy storage.}

Despite the growing importance of Krylov-space diagnostics, their quantitative relation to standard quantum-information-theoretic resources has received comparatively little attention. In particular, it is not clear to what extent dynamical spreading in Krylov space is related to the  dynamics of entanglement~\cite{HHHH2009,GT2009,Das2019} or to the coherence~\cite{A2006,BCP2014,WY2016,SAP2017} properties of the initial state. Since spread complexity depends explicitly on both the Hamiltonian and the chosen initial state, its connection to entanglement and coherence is nontrivial and cannot, in general, be inferred from simple heuristic arguments. This subtle interplay between Krylov spreading and quantum correlations can be understood as follows. An entangled energy eigenstate evolving under its own Hamiltonian acquires only a global phase and therefore exhibits vanishing spread complexity despite possessing finite entanglement. Conversely, a product state evolving under a non-interacting Hamiltonian may display nontrivial spreading in Krylov space while remaining separable throughout the evolution. These examples reveal that the relationship between entanglement and Krylov spreading is non-trivial and model-dependent, thereby motivating the need for deriving rigorous analytical bounds between these quantities.

Another natural question concerns the role of quantum coherence in governing Krylov-space dynamics. If the initial state is an energy eigenstate, the evolution becomes trivial up to a global phase, and no Krylov spreading occurs. This immediately suggests that coherence of the initial state in the energy eigenbasis is necessary for non-zero spread complexity. However, the precise quantitative relation between these quantities remains unknown. In this work, we address this gap by deriving analytical relations between the coherence of the initial state and spread complexity for qubit and qutrit systems. 
It is important  to note that Ref.~\cite{BCS2025} considered the study of coherence in the Krylov basis.  However, we relate the coherence of the initial state, in the energy eigenbasis, with spread complexity.

Our objective in this work is to establish general analytical constraints connecting Krylov-space dynamics with fundamental quantum-information resources. In particular, we derive bounds relating the entanglement properties of evolved states to their delocalization in Krylov space. For bipartite systems, we obtain bounds connecting entanglement  entropy of the evolved state to  spread complexity. For multipartite systems, we derive upper and lower bounds on the inverse participation ratio, a measure of delocalization in Krylov space, in terms of a class of geometric measures of multipartite quantum correlations. Furthermore, for qubit and qutrit systems, we establish analytical relations between the coherence of the initial state in the energy eigenbasis and the resulting spread complexity under arbitrary Hamiltonian evolution.

Taken together, our results provide a quantum-information-theoretic perspective on Krylov-space dynamics and demonstrate that the growth of spread complexity is quantitatively constrained by underlying quantum resources and vice versa. More broadly, our results establish analytical connections between dynamical delocalization in Krylov space and fundamental resource-theoretic properties of quantum states.

The rest of the paper is organized as follows. In Sec.~\ref{Pre}, we introduce the necessary definitions and notation related to Krylov-basis construction, spread complexity, inverse participation ratio, and relevant entanglement and quantum coherence measures. Sec.~\ref{ent-scram} and Sec.~\ref{coh-spread} present our main results. In Sec.~\ref{ent-scram}, we establish bounds relating bipartite and multipartite entanglement to spread complexity and the IPR in Krylov space, respectively. In Sec.~\ref{coh-spread}, we derive analytical relations between quantum coherence and spread complexity for low-dimensional systems. We conclude in Sec.~\ref{Conclusion}.

\section{Preliminaries}\label{Pre}
In this section, we briefly review the essential tools and definitions employed throughout this work, including spread complexity, entanglement, and quantum coherence, which will be employed in the subsequent analysis.\\

\textit{Shannon Entropy.~\cite{S1948, CT1991,J1957}} The Shannon entropy of a discrete probability distribution $\{p_i\}$ is defined as
\begin{equation}
  H(\{p_{i}\}) = -\sum_i p_i \log p_i .  \end{equation}
It quantifies the uncertainty or information content associated with the distribution.
Higher entropy corresponds to greater unpredictability, while $H=0$ when the outcome is certain. \\

\textit{Von Neumann Entropy~\cite{L1973,W1978,S1995,BBPS1996}.}
Consider a bipartite pure state $\rho_{AB}$ defined on the composite Hilbert space 
$\mathcal{H}_A \otimes \mathcal{H}_B$. 
The reduced density operators corresponding to subsystems $A$ and $B$ are obtained 
by taking the partial trace over the complementary subsystem, namely $\rho_A = \mathrm{Tr}_B(\rho_{AB})$, $\rho_B = \mathrm{Tr}_A(\rho_{AB})$.
The von Neumann entropy of the reduced state is defined as
\begin{equation}
S(\rho_A) := - \mathrm{Tr}(\rho_A \ln \rho_A)
= - \mathrm{Tr}(\rho_B \ln \rho_B).
\end{equation}
It serves as a quantitative measure of entanglement between subsystems $A$ and $B$ for pure bipartite states. In particular, $S(\rho_A)=0$ for separable states and is strictly positive for entangled states.\\
 

\textit{Class of Geometric Measures~\cite{GTB2005,DMA2021}.}
For a pure n-partite ($n>2$) quantum state $\ket{\psi}$, consider
\begin{equation}\label{gm}
\mathcal{G}_{\psi} = 1 - \max_{\ket{\phi}\in \mathcal{F}} 
|\langle \phi | \psi \rangle|^{2},
\end{equation}
where the maximization is performed over states $\ket{\phi}$ belonging to a chosen set 
$\mathcal{F}$. $\ket{\psi}$ is termed as $m-$ producible if it can be written as  $\ket{\psi}=\otimes_{i=1}^{k} \ket{\eta}_i$, where each $\ket{\eta}_i$ corresponds to at most m particles with $k \le n$. If $\mathcal{F}$ is taken to be the set of $(m-1)$-producible states, 
the quantity defined in Eq.~(\ref{gm}) corresponds to the geometric measure of 
$m$-producibility. For $m=2$, the maximization is performed over fully product pure states, and 
Eq.~(\ref{gm}) reduces to the geometric measure of $n$ inseparability for multipartite 
systems~\cite{WG2003,BDSI2008}. On the other hand, for $m=n$, the measure reduces to the generalized 
geometric measure (GGM) of entanglement~\cite{DS2010}.
\\

\textit{Relative entropy between two states~\cite{W1978,V2002,Nielsen2010}.} The quantum relative entropy, also known as the Umegaki relative entropy, is a fundamental quantity in quantum information theory that measures the distinguishability between two quantum states. For two density operators $\rho$ and $\sigma$ acting on a finite-dimensional Hilbert space $\mathcal{H}$, the Umegaki relative entropy is defined as,
\begin{equation}
S(\rho || \sigma) = \mathrm{Tr}\left[\rho \left(\log \rho - \log \sigma \right)\right],
\end{equation}
provided that the support of $\rho$ is contained in the support of $\sigma$. If this condition is not satisfied, the relative entropy is defined to be infinite.
\\

\textit{$\ell_1$-norm of Coherence~\cite{BCP2014,SAP2017}.}
Let $|\psi\rangle = \sum_i c_{i} |i\rangle \in \mathcal{H}$ be a pure state expressed in a fixed reference basis 
$\{|i\rangle\}_{i}$. The $\ell_1$-norm of coherence is defined as,
\begin{equation}
C_{\ell_1}(|\psi\rangle)
= \sum_{i \neq j} |\rho_{ij}|,
\end{equation}
where $\rho = |\psi\rangle\langle\psi|$ and $\rho_{ij} = c_{i} c_{j}^{*}$ are the matrix elements of $\rho$ in the reference basis. For a two-level system,  $|\psi\rangle = c_0 |0\rangle + c_1 |1\rangle$ with $|c_0|^2 + |c_1|^2 = 1$, the $\ell_1$-norm of coherence becomes
\begin{equation}
C_{\ell_1}(|\psi\rangle)
=|\rho_{01}| + |\rho_{10}|
=|c_0 c_1^{*}| + |c_1 c_0^{*}|
=2 |c_0| |c_1|.
\end{equation}
For a general pure state  $|\psi\rangle = \sum_i c_{i} |i\rangle$, 
the pairwise coherence between the levels $i$ and $j$ is defined as $C_{ij} = |c_{i}|\,|c_{j}|$.
It quantifies the strength of superposition between the basis states 
$|i\rangle$ and $|j\rangle$, and corresponds to the magnitude of the 
off-diagonal density matrix element $|\rho_{ij}| = |c_{i} c_{j}^{*}|$. The total $\ell_1$-coherence is the sum of all pairwise contributions~\cite{ZPW2023}:
\begin{equation}
C_{\ell_1}(|\psi\rangle)
=
\sum_{i \neq j} |c_{i}|\,|c_{j}|
=
2 \sum_{i<j} |c_{i}|\,|c_{j}|.
\end{equation}

Thus, each unordered pair $(i,j)$ contributes $2 |c_{i}|\,|c_{j}|$,
and the elementary building block is $C_{ij} = |c_{i}|\,|c_{j}|$, and these we refer to as the pairwise coherence throughout the paper.\\

\textit{Spread complexity~\cite{PCASA2019, BCMW2022, RGSS2025, NMMDC2025}.} 
Consider a quantum state $|\psi(t)\rangle= e^{-iHt}\ket{\psi_0}$ evolving under a Hamiltonian $H$,
and let $\{|k_n\rangle\}_n$ denote the associated Krylov basis generated via the
Lanczos algorithm (given below) from an initial state $|k_0\rangle = |\psi_0\rangle$.
The state can be expanded as,
\begin{equation}
|\psi(t)\rangle = \sum_{n} \phi_n(t)\, |k_n\rangle,
\qquad
\phi_n(t) = \langle k_n | \psi(t) \rangle.
\end{equation}
For a pure initial state $|k_0\rangle = |\psi_0\rangle$, the Krylov basis states generated through the Lanczos algorithm remain pure, since they are constructed via successive applications of the Hamiltonian followed by orthogonalization as follows,
\begin{subequations}
\begin{align}
|k_0\rangle &= |\psi_0\rangle, \quad  \text{set} \quad b_0 = 0 \quad\text{and} \quad |k_{-1}\rangle = 0 \\
a_n &= \langle k_n | H | k_n \rangle \\
|v_{n+1}\rangle &= H|k_n\rangle - a_n |k_n\rangle - b_n |k_{n-1}\rangle \\
b_{n+1} &= \|\,|v_{n+1}\rangle\,\| \\
|k_{n+1}\rangle &= \frac{|v_{n+1}\rangle}{b_{n+1}}
\end{align}
\end{subequations}
The norm used in the Lanczos recursion is the Hilbert-Schmidt norm induced by the inner product,
$\|v_{n+1}\| = \sqrt{\langle v_{n+1} | v_{n+1} \rangle}.$
Thus, $b_{n+1} = \|v_{n+1}\| = \sqrt{\langle v_{n+1} | v_{n+1} \rangle}$.
The algorithm terminates when $b_{n+1}=0$, and subsequently further action of $H$ does not generate new linearly independent states in the Krylov basis. 
For non-interacting Hamiltonians and suitably chosen product initial states, the Krylov basis states may retain a product structure or can become entangled; however, $|\psi(t)\rangle$ written in the Krylov basis will never be entangled.  In contrast, for generic interacting many-body Hamiltonians, repeated action of $H$ generates correlations, and the Krylov basis states typically become entangled. 

The spread complexity is defined as the average position in Krylov space,
\begin{equation}
K(t) = \sum_{n} n\, |\phi_n(t)|^2.
\end{equation}
It quantifies the dynamical spreading of the state along the Krylov chain,
thereby serving as a measure of operator or state complexity growth under
unitary evolution.\\

\textit{Inverse Participation Ratio (IPR)}~\cite{T1974,PKMS2025,KPPTS2025}. 
Given a normalized state expanded in an orthonormal basis $\{\ket{\phi_n}\}$ as $\ket{\psi(t)} = \sum_n x_n(t) \ket{\phi_n}$,
the IPR is defined as,
\begin{equation}
\mathrm{IPR}(t) = \sum_n |x_n(t)|^4.
\end{equation}
The IPR quantifies how localized the state is in the chosen basis. Thus, smaller values of IPR correspond to stronger delocalization. In the context of spread complexity, the natural basis is the Krylov basis $\{\ket{k_n}\}$ generated by the Hamiltonian and the initial state. Writing $\ket{\psi(t)} = \sum_n c_n(t) \ket{k_n}$, the Krylov-space IPR becomes $\mathrm{IPR} = \sum_n |c_n(t)|^4$.
This quantity measures the localization of the evolved state along the emergent Lanczos chain. 
While the spread complexity captures the average position in the Krylov chain, the IPR instead probes the concentration versus dispersion of amplitudes across the Krylov chain.
In particular, exponential operator growth in chaotic systems is typically accompanied by a rapid decay of $\mathrm{IPR}$, signaling delocalization in Krylov space. 
Hence, the IPR provides complementary information to spread complexity. Consider the dimension of the Krylov space to be $d_K$, then it is easy to verify that
$
\frac{1}{d_K} \le \mathrm{IPR} \le 1,
$
where $d_K$ also denotes the number of sites on the Krylov chain. 
The lower bound is attained for a completely delocalized state, characterized by uniform amplitudes \(c_{n} = \frac{1}{\sqrt{d_K}}\) for all \(n\). In contrast, the upper bound corresponds to a fully localized state, in which \(c_{n} = 1\) for some \(n\), while all other coefficients vanish. Throughout this work, the inverse participation ratio is evaluated exclusively in the Krylov basis.


\section{Bounds relating entanglement and scrambling}\label{ent-scram}
We are now in a position to present our main results. 
We begin by establishing a bound relating the von Neumann entropy of the evolved state to the corresponding spread complexity. 
It is worth emphasizing that the entanglement properties of the Krylov basis vectors 
$\{\ket{k_n}\}$ do not, in general, require the underlying Hamiltonian to be interacting. 
Since the Krylov basis is constructed through repeated action of $H$ followed by 
orthogonalization, entanglement may arise at the level of individual basis vectors 
even when $H$ itself is non-interacting, depending on the structure of the initial state 
and the generated subspace. However, for the evolved state $\ket{\psi(t)} = e^{-iHt}\ket{\psi_0}$ to dynamically 
develop entanglement starting from a separable initial state, the Hamiltonian must 
contain interacting terms. Consequently, in the context of Proposition~\ref{biparty entk}, 
the Hamiltonian $H$ is required to be interacting in order for the entropy bound 
to capture nontrivial entanglement generation if the initial state is chosen to be product.\\

\begin{proposition}\label{biparty entk}
Let $\{\ket{k_n}\}_n$ denote the Krylov basis generated from a bipartite initial state under unitary evolution governed by a Hamiltonian $H$. Consider the time-evolved state expressed in the Krylov basis as $\ket{\psi(t)} = \sum_{n} c_n(t)\,\ket{k_n}$. Then, at any time $t$, the von Neumann entropy of the reduced state $\rho_A(t)$ is upper bounded in terms of the entropies of the subsystems of the Krylov basis states and the spread complexity as,
\begin{equation}
S\big(\rho_A(t)\big)\;\le\;d_{K}\left[
\sum_{n=0}^{d_{K}-1} |c_n(t)|^2\,S\big(\rho_A^{(n)}\big)\;+\;f(K)
\right],
\end{equation}
where $f(K) = (1+K)\log(1+K) - K\log K$,
$K$ denotes the spread complexity, and $d_K$ is the dimension of the Krylov space.
\end{proposition}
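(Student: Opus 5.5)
The plan has three steps: reduce the entropy of a superposition to a convex combination plus a Shannon term, handle the cross terms with a crude factor $d_K$, and bound the Shannon term by $f(K)$ through a maximum-entropy argument.

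\textbf{Step 1: write $\rho_A(t)$ without cross terms.} The difficulty with $\ket{\psi(t)}=\sum_n c_n(t)\ket{k_n}$ is that $\rho_A(t)$ contains cross terms $c_nc_m^*\,\mathrm{Tr}_B\ket{k_n}\bra{k_m}$. To remove them, we introduce the diagonal mixture
\begin{equation}
\sigma_{AB}=\sum_n p_n \ket{k_n}\bra{k_n}, \qquad p_n=|c_n(t)|^2 .
\end{equation}
Write $\ket{\psi}\bra{\psi}=\sum_{n,m}c_nc_m^*\ket{k_n}\bra{k_m}$. Averaging over random phases $c_n\to e^{i\theta_n}c_n$ yields $\sigma_{AB}$. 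Equivalently, we use the pinching inequality $\ket{\psi}\bra{\psi}\le d_K\,\sigma_{AB}$ in operator order, which holds because the pinching onto the $d_K$ Krylov projectors satisfies $\rho\le d_K\,\mathcal{P}(\rho)$. Taking the partial trace preserves positivity, so
\begin{equation}
\rho_A(t)\le d_K\sum_n p_n\,\rho_A^{(n)}=:d_K\,\bar\rho_A ,
\end{equation}
where $\rho_A^{(n)}=\mathrm{Tr}_B\ket{k_n}\bra{k_n}$.

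\textbf{Step 2: bound the entropy of $\bar\rho_A$ and $\rho_A(t)$.} From the operator inequality, the cleanest route I would try first is the following. Write $\rho_A(t)=\sum_n p_n\rho_A^{(n)}+X$, where $X$ collects the cross terms, and bound $S(\rho_A)$ through $S(\bar\rho_A)$. The standard mixing bound gives
\begin{equation}
S(\bar\rho_A)\le \sum_n p_n S(\rho_A^{(n)})+H(\{p_n\}).
\end{equation}
To pass from $\bar\rho_A$ to $\rho_A(t)$, I would use the operator inequality together with monotonicity of $-\log$ in the form $\log\rho_A\le \log d_K+\log\bar\rho_A$ on the support. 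This gives
\begin{equation}
S(\rho_A)=-\mathrm{Tr}\,\rho_A\log\rho_A=S(\rho_A\|\bar\rho_A)-\mathrm{Tr}\,\rho_A\log\bar\rho_A .
\end{equation}
This identity needs care, and I expect it to be the main obstacle, because the cross term $-\mathrm{Tr}\,\rho_A\log\bar\rho_A$ is not directly comparable to $S(\bar\rho_A)$. If the sharp route fails, I would accept a crude loss: bound the cross-term contributions by $d_K$ times the diagonal quantities, using $\rho_A\le d_K\bar\rho_A$ and operator concavity arguments. This yields
\begin{equation}
S(\rho_A)\le d_K\Big[\sum_n p_nS(\rho_A^{(n)})+H(\{p_n\})\Big],
\end{equation}
which has exactly the prefactor structure in the statement. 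Since the statement allows the multiplicative factor $d_K$, a loose argument suffices, for instance expanding $\rho_A$ in the $d_K$ pinched blocks and applying subadditivity-type estimates term by term.

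\textbf{Step 3: bound the Shannon entropy by the spread complexity.} It remains to show $H(\{p_n\})\le f(K)$ for $K=\sum_n n\,p_n$. This is the maximum-entropy problem for distributions on $\mathbb{N}_0$ with fixed mean $K$. The maximizer is the geometric distribution $p_n=\frac{1}{1+K}\big(\frac{K}{1+K}\big)^n$, and its entropy is exactly $(1+K)\log(1+K)-K\log K$. I would prove this via nonnegativity of the relative entropy $D(p\|g)\ge 0$ against the geometric distribution $g$. Since $-\log g_n$ is affine in $n$, we get $-\sum_n p_n\log g_n=f(K)$, and hence $H(p)\le f(K)$. Restricting to $n\le d_K-1$ only lowers $H$, so the bound holds. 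Combining this with Step 2 gives the proposition.
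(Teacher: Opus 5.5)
The gap is in Step 2. The factor $d_K$ comes entirely from the inequality $S(\rho_A(t))\le d_K\,S(\bar\rho_A)$, and your proposal never proves it. The rest is fine:
\begin{itemize}
\item Step 1 holds: $\ket{\psi}\bra{\psi}\le d_K\,\sigma_{AB}$ follows directly from Cauchy--Schwarz, $|\langle v|\psi\rangle|^2\le d_K\sum_n p_n|\langle v|k_n\rangle|^2$, even though the Krylov projectors do not resolve the identity on $\mathcal{H}_A\otimes\mathcal{H}_B$.
\item The mixing bound is standard.
\item Step 3 via $D(p\|g)\ge 0$ is correct. It is more rigorous than the paper's Lagrange-multiplier computation in Appendix~\ref{shannon}, which only locates a stationary point.
\end{itemize}
The routes you sketch for Step 2, however, do not work as written:
\begin{itemize}
\item Operator monotonicity of $\log$ gives $\log\rho_A\le\log d_K+\log\bar\rho_A$, hence $S(\rho_A)\ge-\log d_K-\mathrm{Tr}\,\rho_A\log\bar\rho_A$. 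That is a \emph{lower} bound.
\item Your relative-entropy identity has the wrong sign. It should read $S(\rho_A)=-S(\rho_A\|\bar\rho_A)-\mathrm{Tr}\,\rho_A\log\bar\rho_A$.
\item The fallback (``crude loss'', ``operator concavity arguments'', ``subadditivity-type estimates term by term'') is not an argument.
\item Averaging over continuous random phases only controls the phase-averaged entropy, not the entropy at the specific phases of $\ket{\psi(t)}$.
\end{itemize}
So the prefactor $d_K$ is asserted rather than derived.

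You already have everything needed to close it; here are two ways. First, with the corrected sign, drop $S(\rho_A\|\bar\rho_A)\ge 0$. Then use $\mathrm{supp}\,\rho_A\subseteq\mathrm{supp}\,\bar\rho_A$, the fact that $-\log\bar\rho_A\ge 0$ on that support, and $d_K\bar\rho_A-\rho_A\ge 0$. This gives $S(\rho_A)\le\mathrm{Tr}[\rho_A(-\log\bar\rho_A)]\le d_K\,\mathrm{Tr}[\bar\rho_A(-\log\bar\rho_A)]=d_K\,S(\bar\rho_A)$. Second, for $d_K\ge 2$ the operator $\tau_A:=(d_K\bar\rho_A-\rho_A)/(d_K-1)$ is a state by Step 1. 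Then $\bar\rho_A=\frac{1}{d_K}\rho_A+\bigl(1-\frac{1}{d_K}\bigr)\tau_A$, and concavity gives $S(\bar\rho_A)\ge S(\rho_A)/d_K$.

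The paper does not prove this step itself. It imports the whole inequality $S(\rho_A)\le d_K\bigl[\sum_n p_nS(\rho_A^{(n)})+H(\{p_n\})\bigr]$ from Ref.~\cite{GR2008}, and then adds only the maximum-entropy bound $H\le f(K)$, exactly as in your Step 3. The standard proof of that imported inequality is the discrete version of your phase-averaging remark. Set $\ket{\psi_j}=\sum_n e^{2\pi i nj/d_K}c_n\ket{k_n}$ for $j=0,\dots,d_K-1$. Then $\frac{1}{d_K}\sum_j\ket{\psi_j}\bra{\psi_j}=\sigma_{AB}$ exactly. Concavity bounds the average of the $d_K$ reduced entropies by $S(\bar\rho_A)$, and discarding the nonnegative $j\neq 0$ terms costs precisely the factor $d_K$. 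With Step 2 repaired in any of these ways, your proof is complete and has the same structure as the paper's.
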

    
\begin{proof}
It was shown in~\cite{GR2008} that for a set of normalized orthogonal bipartite states $\{\ket{k_n}\}_n$, and a state $\ket{\psi} = \sum_n c_n(t)\,\ket{k_n}$, the von Neumann entropy of the reduced state $\rho_A(t)$ satisfies the following upper bound:
\begin{equation}
S\big(\rho_A(t)\big)\;\le\;
d_{K}\left[\sum_{n=0}^{d_{K}-1} |c_n(t)|^2\, S\big(\rho_A^{(n)}\big)
\;+\;H\big(\{|c_n(t)|^2\}\big)
\right],
\end{equation}
where $H\big(\{|c_n(t)|^2\}\big)$ denotes the Shannon entropy of the probability distribution $\{|c_n(t)|^2\}$. At any time $t$, this distribution is characterized by a fixed spread complexity, given by the mean position on the Krylov chain. For any such distribution with fixed mean, it holds that $H\big(\{|c_n(t)|^2\}\big)
\;\le\;
(1+K)\log(1+K) - K\log K,$
as proven in Appendix~\ref{shannon}. Substituting this bound into the previous expression, we obtain
\begin{equation}
S\big(\rho_A(t)\big)
\;\le\;
d_{K}\left[
\sum_{n=0}^{d_{K}-1} |c_n(t)|^2\, S\big(\rho_A^{(n)}\big)
\;+\;
f(K)
\right],
\end{equation}
where $f(K) = (1+K)\log(1+K) - K\log K$.
\end{proof}

Proposition~\ref{biparty entk} clarifies how the entanglement of the evolved state is constrained by its Krylov-space structure. The upper bound contains two distinct contributions: the average entanglement of the Krylov basis vectors populated during the evolution and a monotonic function of spread complexity, $f(K)$. Since $f(K)$ is monotonically increasing, the entanglement of the evolved state is necessarily limited whenever the Krylov spreading is small and the populated Krylov vectors carry little entanglement. This result does not, however, imply a monotonic relation between entanglement and spread complexity; in general, either quantity may be nonzero while the other vanishes. A particularly transparent case arises when all populated Krylov vectors are product across the bipartition A:B. In this case, $S(\rho_A^{(n)})=0$, for every populated $n$, and the bound reduces to $S\big(\rho_A(t)\big)
\;\le\;
d_{K} f(K)$. Consequently, within this restricted setting,  $S\big(\rho_A(t)\big)>0$ implies non-zero $K(t)$. The converse does not hold: nonzero Krylov spreading alone does not imply entanglement.

We next proceed onto giving upper and lower bound on the IPR of the evolved state in the Krylov basis at any time \textit{t} in terms of a \textit{class of Geometric measures} defined in Sec(~\ref{Pre}).

 \begin{proposition}\label{multiparty entk}

 Given that $\{\ket{k_n}\}_{n}$ denote the Krylov basis generated from some multipartite initial state under unitary evolution governed by Hamiltonian H, the IPR of the evolved state $\ket{\psi(t)}=\sum_{n} c_{n}(t)\ket{k_{n}}$ written in Krylov basis, at any time t, can be upper and lower bounded in terms of a class of  Geometric measures as follows,
  \begin{equation}
\begin{aligned}\label{eq:IRP_bounds_GGM}
\frac{
1 - \sqrt{\,1 - \left[ \mathcal{G}_{\psi}-X \right]^{2}}
}{2}
&\le \mathrm{IPR}
&\le
\frac{
1 + \sqrt{\,1 - \left[ \mathcal{G}_{\psi}-X \right]^{2}}
}{2},
\end{aligned}
\end{equation} with the condition $\mathcal{G}_{\psi} - X \le 1$,
where $\mathcal{G}_{\psi} \equiv \mathcal{G}(|\psi\rangle)$ and $X \equiv \sum_{n}  |c_n|^{4}  \mathcal{G}_{\ket{k_{n}}} $.
 \end{proposition}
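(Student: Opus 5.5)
The two-sided bound in Eq.~(\ref{eq:IRP_bounds_GGM}) is equivalent to a single quadratic inequality. Set $y\equiv\mathcal{G}_{\psi}-X$ and $P\equiv\mathrm{IPR}$. Then $\tfrac{1}{2}(1-\sqrt{1-y^2})\le P\le \tfrac{1}{2}(1+\sqrt{1-y^2})$ holds iff $(2P-1)^2\le 1-y^2$, that is,
\begin{equation*}
y^2\;\le\;4P(1-P)\qquad\Longleftrightarrow\qquad |\mathcal{G}_{\psi}-X|\;\le\;2\sqrt{\mathrm{IPR}\,(1-\mathrm{IPR})}.
\end{equation*}
The condition $\mathcal{G}_{\psi}-X\le 1$ in the statement only ensures that the square root is real. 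So I would prove this one inequality in $y$ and then solve the quadratic in $\mathrm{IPR}$ to recover both sides of Eq.~(\ref{eq:IRP_bounds_GGM}).

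The natural identity to use is $1-\mathrm{IPR}=\sum_{n\neq m}|c_n|^2|c_m|^2$. For the upper side $\mathcal{G}_{\psi}-X\le 2\sqrt{\mathrm{IPR}(1-\mathrm{IPR})}$, I would take a fixed $\ket{\phi}\in\mathcal{F}$ and use $\mathcal{G}_{\psi}\le 1-|\braket{\phi|\psi}|^2$. Next I would expand
\begin{equation*}
|\braket{\phi|\psi}|^2=\sum_n |c_n|^2|\braket{\phi|k_n}|^2+\sum_{n\neq m}c_n c_m^{*}\braket{\phi|k_n}\braket{k_m|\phi}.
\end{equation*}
The diagonal part should be compared with $X=\sum_n|c_n|^4\,\mathcal{G}_{\ket{k_n}}$, with $\ket{\phi}$ chosen among the optimal states of the $\ket{k_n}$. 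The natural candidate is the optimal state for the Krylov vector carrying the largest weight, since the factors $|c_n|^4$ in $X$ reflect weighting each term by the amplitude squared twice. I would bound the off-diagonal part by Cauchy--Schwarz. That splits it into $\bigl(\sum_{n\neq m}|c_n|^2|c_m|^2\bigr)^{1/2}=\sqrt{1-\mathrm{IPR}}$ times a factor built from the overlaps $|\braket{\phi|k_n}|$. Orthonormality of the Krylov basis gives $\sum_n|\braket{\phi|k_n}|^2\le 1$, and I would use this to control the overlap factor.

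For the lower side $X-\mathcal{G}_{\psi}\le 2\sqrt{\mathrm{IPR}(1-\mathrm{IPR})}$, I would run the argument in reverse. I would take $\ket{\phi^\star}$ optimal for $\ket{\psi}$ and use $\mathcal{G}_{\ket{k_n}}\le 1-|\braket{\phi^\star|k_n}|^2$ for every $n$, which upper bounds $X$. Then I would express $1-\mathcal{G}_{\psi}=|\braket{\phi^\star|\psi}|^2$ by the same diagonal-plus-cross-term expansion. The cross terms would again be handled by Cauchy--Schwarz, producing the factor $\sqrt{1-\mathrm{IPR}}$.

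The main obstacle is getting the factor $\sqrt{\mathrm{IPR}}$, and not merely a constant, multiplying $\sqrt{1-\mathrm{IPR}}$. It also requires showing that the mismatch between the weights $|c_n|^2$ in the diagonal part and $|c_n|^4$ in $X$ is absorbed into the same $2\sqrt{P(1-P)}$ budget. My first attempt would be to write $|c_n|^2=|c_n|^4+|c_n|^2(1-|c_n|^2)$ and bound the remainder $\sum_n|c_n|^2(1-|c_n|^2)=1-\mathrm{IPR}$. I would then apply Cauchy--Schwarz to the cross terms in the weighted form $\sum_{n\neq m}|c_n||c_m|\,a_na_m\le\bigl(\sum_n|c_n|^2a_n^2\bigr)\ldots$, with $a_n=|\braket{\phi|k_n}|$. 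The aim is to produce $\sqrt{\sum_n|c_n|^4}\sqrt{\sum_{n\neq m}|c_n|^2|c_m|^2}$ and then use $1-P\le\sqrt{P(1-P)}\cdot\sqrt{(1-P)/P}$ where needed. If this does not close directly, I would fall back to the two-component (qubit-like) picture on the Krylov chain: there $4P(1-P)=(2|c_0||c_1|)^2$ is exactly the squared $\ell_1$-coherence. That would suggest reducing the general case to it by grouping the dominant Krylov component against the rest.
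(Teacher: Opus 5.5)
Your reduction is correct. For $|y|\le1$ the two-sided bound is equivalent to $|\mathcal{G}_\psi-X|\le 2\sqrt{\mathrm{IPR}(1-\mathrm{IPR})}$. You are also right that both signs are needed: the paper only argues $\mathcal{G}_\psi-X\le 2\sqrt{\mathrm{IPR}(1-\mathrm{IPR})}$ and then squares, which tacitly assumes $\mathcal{G}_\psi\ge X$. The genuine gap is that this upper side is false. So the factor $\sqrt{\mathrm{IPR}}$ you single out as the main obstacle cannot be produced by any argument. Here is a counterexample. Take three parties with levels $|0\rangle,\dots,|N\rangle$, $\ket{\psi_0}=|000\rangle$, and $H=\sum_{n=0}^{N-1}b_{n+1}\bigl[(|n{+}1\rangle\langle n|)^{\otimes 3}+\mathrm{h.c.}\bigr]$ with $b_n=\tfrac12\sqrt{n(N+1-n)}$. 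Lanczos returns $|k_n\rangle=|nnn\rangle$; these are all product, so $X=0$. On their span, $H$ is the spin-$N/2$ operator $J_x$ written in its $J_z$ basis. Hence $c_n(t)=\sqrt{\binom{N}{n}}\cos^{N-n}(t/2)\,(-i\sin(t/2))^n$, and at $t=\pi/2$ one has $|c_n|^2=\binom{N}{n}2^{-N}$. The state $\psi(\pi/2)=\sum_n c_n|nnn\rangle$ has Schmidt coefficients $|c_n|$ across every cut. Therefore $\mathcal{G}_\psi=1-\max_n|c_n|^2$, for the product-state measure and the GGM alike, while $\mathrm{IPR}=\binom{2N}{N}4^{-N}$. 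For $N=12$ this gives $\mathcal{G}_\psi\approx0.774$ but $2\sqrt{\mathrm{IPR}(1-\mathrm{IPR})}\approx0.735$. Equivalently, the claimed lower bound $\tfrac12\bigl(1-\sqrt{1-\mathcal{G}_\psi^2}\bigr)\approx0.184$ exceeds $\mathrm{IPR}\approx0.161$. The violation grows with $N$, since $\mathcal{G}_\psi\to1$ and $\mathrm{IPR}\sim(\pi N)^{-1/2}$. So the statement as written cannot be proved.

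Your individual steps fail accordingly. Choose a single $\phi$ optimal for the dominant Krylov vector $|k_m\rangle$ and apply Cauchy--Schwarz and Bessel's inequality to the remaining overlaps. This gives only $\sqrt{1-\mathcal{G}_\psi}\ge|c_m|\sqrt{1-\mathcal{G}_{\ket{k_m}}}-\sqrt{(1-|c_m|^2)\,\mathcal{G}_{\ket{k_m}}}$. For product Krylov vectors that means $\mathcal{G}_\psi\le1-\max_n|c_n|^2$, which is attained in the example and is not $O(\sqrt{\mathrm{IPR}})$. Splitting $|c_n|^2=|c_n|^4+|c_n|^2(1-|c_n|^2)$ leaves a remainder $1-\mathrm{IPR}$. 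That remainder is at most $2\sqrt{\mathrm{IPR}(1-\mathrm{IPR})}$ only when $\mathrm{IPR}\ge1/5$, so it fails exactly in the delocalized regime; your factor $\sqrt{(1-\mathrm{IPR})/\mathrm{IPR}}$ is unbounded there. Your fixed-$\phi$ expansion differs from the paper's reverse-superposition argument, but the paper's route does not close the gap either. Its appendix bound $\mathcal{G}_\psi\le\min_n A_n$ comes from squaring $\sqrt{1-\mathcal{G}_\psi}\ge|c_m|\sqrt{1-\mathcal{G}_{\ket{k_m}}}-\sqrt{1-|c_m|^2}$ without checking that the right-hand side is nonnegative. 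Since $A_m\to0$ as $c_m\to0$, it would force $\mathcal{G}_{\psi(0)}=0$ for every entangled initial state with $d_K>1$, because at $t=0$ all $c_{m\ge1}$ vanish. In the example above it would give $\mathcal{G}_\psi\le A_0\approx0.03$. Only one-component bounds like the one above survive, and they do not yield a two-sided IPR bound of the stated form.
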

  
 \begin{proof} Consider that the evolved state in Krylov basis $\{\ket{k_{n}}\}_{n}$ at any time \textit{t} is written as $\ket{\psi(t)}=\sum_{n} c_{n}(t)\ket{k_{n}}$. For the sake of completeness, now we shall rewrite the lower bound given Fact 1 of~\cite{DMA2021},
\begin{equation}
|\langle \varphi|\Psi\rangle|^2
\le
\sum_{n} |c_n|^2 |\langle\varphi|k_n\rangle|^2
+
2\sum_{n<m} |c_nc_m|
|\langle\varphi|k_n\rangle|
|\langle\varphi|k_m\rangle|.
\end{equation}
With this, we have,
\begin{equation}
   \begin{aligned}
\mathcal{G}_{\psi}&\ge 1-
\max_{|\varphi\rangle}
\Big[
\sum_n |c_n|^2 |\langle\varphi|k_n\rangle|^2
+
2\sum_{n<m} |c_nc_m|
|\langle\varphi|k_n\rangle|
|\langle\varphi|k_m\rangle|
\Big].
\\[1ex]
&\ge1-
\sum_n |c_n|^2
\max |\langle\varphi|k_n\rangle|^2
-
2\sum_{n<m} |c_nc_m|
\max
\big(|\langle\varphi|k_n\rangle|
|\langle\varphi|k_m\rangle|\big).
\end{aligned}
 \end{equation}

Now we know $\mathcal{G}_{\ket{k_{n}}}
=
1-
\max_{|\varphi\rangle} |\langle\varphi|k_n\rangle|^2 $, thus $\max |\langle\varphi|k_n\rangle|^2
=
1-\mathcal{G}_{\ket{k_{n}}}$ and $\max |\langle\varphi|k_n\rangle|
=
\sqrt{1-\mathcal{G}_{\ket{k_{n}}}}$. Hence, $\max |\langle\varphi|k_n\rangle||\langle\varphi|k_m\rangle|
\le
\sqrt{1-\mathcal{G}_{\ket{k_{n}}}}
\sqrt{1-\mathcal{G}_{\ket{k_{n}}}}.$

Substituting these bounds into the above equation, we get,
\begin{equation}
\begin{aligned}
\mathcal{G}_{\psi}
&\ge
1-
\sum_n |c_n|^2 (1-\mathcal{G}_{\ket{k_{n}}})
-
2\sum_{n<m} |c_nc_m|
\sqrt{1-\mathcal{G}_{\ket{k_{n}}}}
\sqrt{1-\mathcal{G}_{\ket{k_{j}}}}
\\[1ex]
&=\sum_n |c_n|^2 \mathcal{G}_{\ket{k_{n}}}
-
2\sum_{n<m} |c_nc_m|
\sqrt{1-\mathcal{G}_{\ket{k_{n}}}}
\sqrt{1-\mathcal{G}_{\ket{k_{m}}}}.
\end{aligned}
\end{equation}

Using the fact that,
\begin{align}
&\left(\sum_n  |c_n|\sqrt{1-\mathcal{G}_{\ket{k_{n}}}}\right)^2  \nonumber \\ 
&=
\sum_n |c_n|^2(1-\mathcal{G}_{\ket{k_{n}}})  +
2\sum_{n<m}
\sqrt{ |c_n|^2  |c_m|^2}
\sqrt{1-\mathcal{G}_{\ket{k_{n}}}}
\sqrt{1-\mathcal{G}_{\ket{k_{m}}}},
\end{align}

we can write the lower bound obtained in~\cite{DMA2021} as $\mathcal{G}_{(|\psi\rangle)}\ge
1-\left(\sum_n  |c_n|\sqrt{1-\mathcal{G}_{\ket{k_{n}}}}\right)^2.$  \\
Using this lower bound, we derive an upper bound on $G_{|\psi\rangle}$ in App.~\ref{A}. It is given by,
\begin{equation}   
\begin{aligned}
\mathcal{G}_{(|\psi\rangle)}
\le \min_{n} \Bigg\{ \,
& |c_n(t)|^{2} \, \mathcal{G}_{\ket{k_{n}}}  + 2 |c_n(t)| 
\sqrt{1 - |c_n(t)|^{2}}
\sqrt{1 -\mathcal{G}_{\ket{k_{n}}}}
\Bigg\}.
\end{aligned}
\end{equation}
For convenience of notation we define $A_{n} \equiv |c_n(t)|^{2} \, \mathcal{G}_{\ket{k_{n}}} + 
  2 |c_n(t)| 
\sqrt{1 - |c_n(t)|^{2}}
\sqrt{1 - \mathcal{G}_{\ket{k_{n}}}}$. 
Now since $A_{n} \ge 0$ and $\sum_{n}|c_{m}|^2=1$, it is trivial to show that $\min_{n} \{A_{n}\} \le \sum_{n} |c_{m}|^2 A_{n} \le \sum_{n} A_{n}$, which implies,
\begin{equation}\label{up}
\begin{aligned}
\mathcal{G}_{(|\psi\rangle)}
&\le \sum_{n} |c_n|^{2} A_n \\
&= \sum_{n} |c_n|^{4} \mathcal{G}_{\ket{k_{n}}} 
+ 2 \sum_{n} |c_n|^{2} 
\sqrt{|c_n|^{2}(1 - |c_n|^{2})}
\sqrt{1 - \mathcal{G}_{\ket{k_{n}}}}.
\end{aligned}
\end{equation} 
Now, using the fact that $\sqrt{1 - \mathcal{G}_{\ket{k_{n}}}}\le 1$ and Cauchy-Schwartz inequality, the second term of the above inequality becomes,
\begin{equation}\label{up}
\begin{aligned}
&\sum_{n} |c_n|^{2}
\sqrt{|c_n|^{2}(1 - |c_n|^{2})}
\sqrt{1 - \mathcal{G}_{\ket{k_{n}}}}
\nonumber \\
&\le
\sum_{n} |c_n|^{2}
\sqrt{|c_n|^{2}(1 - |c_n|^{2})} \\
&\le
\sqrt{\sum_{n} |c_n|^{4}}
\;
\sqrt{\sum_{n} |c_n|^{2}(1 - |c_n|^{2})}\\
&=
\sqrt{
\left( \sum_{n} |c_n|^{4} \right)
\left( 1 - \sum_{n} |c_n|^{4} \right)
} \\
&=
\sqrt{\mathrm{IPR}}\,
\sqrt{1 - \mathrm{IPR}}.
\end{aligned}
\end{equation}
With this in eq.(~\ref{up}) we finally have the following upper bound,
\begin{equation}\label{GGM-UPPER}
\mathcal{G}_{(|\psi\rangle)}
\le
\sum_{n} |c_n|^{4} \mathcal{G}_{\ket{k_{n}}}
+
2 \sqrt{\mathrm{IPR}}\sqrt{1 - \mathrm{IPR}}.
\end{equation}
For notational simplification, let us define $\sum_{n}  |c_n|^{4}  \mathcal{G}_{\ket{k_{n}}} \equiv X$ and $\mathrm{IPR} \equiv Z$. With this, eq.(~\ref{GGM-UPPER}) takes the following simple looking form,
\begin{equation}
\frac{(\mathcal{G}_\psi - X)^2}{4} \le Z(1 - Z).
\end{equation}
Rearranging the terms, we arrive at a quadratic inequality,
\begin{equation}\label{eq:eqnZ}
- \frac{(\mathcal{G}_\psi - X)^2}{4} \ge Z^2 - Z
\;\Rightarrow\;
Z^2 - Z + \frac{(\mathcal{G}_\psi - X)^2}{4} \le 0.
\end{equation}
whose real solution exists if  $1 - (\mathcal{G}_\psi - X)^2 \ge 0  \implies 
(\mathcal{G}_\psi - X)^2 \le 1.$

Therefore, $Z_{\pm}
=\frac{1 \pm \sqrt{\,1 - (\mathcal{G}_\psi - X)^2}}{2} \implies$
\begin{equation}
\frac{
1 - \sqrt{\,1 - \left( \mathcal{G}_\psi - X \right)^2}
}{2}
\;\le\;
Z
\;\le\;
\frac{
1 + \sqrt{\,1 - \left( \mathcal{G}_\psi - X \right)^2}
}{2}.
\end{equation}
    Recalling that IPR $\equiv $Z, we have the final expression as, 
  \begin{equation}
\frac{
1 - \sqrt{\,1 - \left( \mathcal{G}_\psi - X \right)^2}
}{2}
\;\le\;
\text{IPR}
\;\le\;
\frac{
1 + \sqrt{\,1 - \left( \mathcal{G}_\psi - X \right)^2}
}{2}.
\end{equation}  
 \end{proof}

\begin{remark}
Upper bound on the IPR obtained above is tighter than the trivial bound \( \mathrm{IPR} \leq 1 \) whenever $\left( \mathcal{G}_\psi - X \right)^2 > 0$. Therefore, the evolved state cannot be completely localized on a single Krylov vector.
\end{remark}
Spread complexity specifies the mean position of the evolving state along the Krylov chain, but does not fully characterize how the corresponding probability distribution is dispersed. To quantify fluctuations about this mean, we consider the variance of the Krylov-position operator. As shown in Appendix~\ref{var}, this variance admits both upper and lower bounds in terms of the Krylov-basis IPR. Combining these bounds with Proposition~\ref{multiparty entk} further yields constraints on Krylov-position fluctuations in terms of geometric measures of multipartite quantum correlations.
\section{Relating coherence of initial state and spread complexity}\label{coh-spread}
 It is evident that if the initial state is chosen to be an eigenstate of the Hamiltonian generating the evolution, the corresponding spread complexity is identically zero, since the state acquires only a global phase and does not spread in the Krylov basis. Consequently, any initial state expressed as a non-trivial superposition of distinct energy eigenstates necessarily exhibits non-zero spread complexity, as the dynamics induces non-trivial support over higher Krylov vectors. However, the precise mathematical dependence of spread complexity on coherence of initial state in the energy eigenbasis of the Hamiltonian is not immediate. In particular, while non-zero initial state coherence is necessary for non-trivial Krylov growth, the functional relation between these quantities is subtle and requires explicit analysis. In the following two propositions, we establish such relations explicitly for the cases of qubit and qutrit systems.
 \begin{proposition}
     The coherence of the initial pure state written in the eigenbasis $\{\ket{E_0},\ket{E_1}\}$ of evolution governing a non-degenerate Hamiltonian H, for a qubit system, is related to spread complexity K(t) as follows,
     \begin{equation}
         K(t)=C^2_{l_{1}} \sin^2{\frac{\omega t}{2}},
     \end{equation}
      where $C_{l_{1}} $ denotes $l_{1}$ coherence of the initial state and $\omega$ is the energy difference of the energies i.e $\omega=E_1 - E_0$.
 \end{proposition}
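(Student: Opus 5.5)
Write the initial state as $\ket{\psi_0}=c_0\ket{E_0}+c_1\ket{E_1}$, with $|c_0|^2+|c_1|^2=1$. Then run the Lanczos recursion of Sec.~\ref{Pre} explicitly. Since the Hilbert space is two-dimensional, the Krylov space has $d_K\le 2$. If $c_0c_1=0$, the state is an eigenstate, $K(t)=0$ and $C_{l_1}=0$, so the claimed identity holds trivially. We may therefore assume $c_0c_1\neq 0$, so that $d_K=2$.

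\textbf{Lanczos step.} First compute $a_0=\langle k_0|H|k_0\rangle=|c_0|^2E_0+|c_1|^2E_1$. Next compute
\[
\ket{v_1}=(H-a_0)\ket{k_0}=c_0(E_0-a_0)\ket{E_0}+c_1(E_1-a_0)\ket{E_1}.
\]
Using $E_0-a_0=-|c_1|^2\omega$ and $E_1-a_0=|c_0|^2\omega$, this becomes
\[
\ket{v_1}=\omega\bigl(-c_0|c_1|^2\ket{E_0}+c_1|c_0|^2\ket{E_1}\bigr).
\]
Hence $b_1=|\omega|\,|c_0||c_1|$, and
\[
\ket{k_1}=\mathrm{sgn}(\omega)\bigl(-c_0|c_1|/|c_0|\,\ket{E_0}+c_1|c_0|/|c_1|\,\ket{E_1}\bigr).
\]
One checks directly that $\ket{k_1}$ is orthogonal to $\ket{k_0}$ and normalized.

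\textbf{Amplitude.} In a two-dimensional Krylov space,
\[
K(t)=1\cdot|\phi_1(t)|^2=1-|\phi_0(t)|^2,
\]
so it is easiest to compute the survival amplitude
\[
\phi_0(t)=\langle\psi_0|e^{-iHt}|\psi_0\rangle=|c_0|^2e^{-iE_0t}+|c_1|^2e^{-iE_1t}.
\]
Its modulus squared is
\[
|\phi_0|^2=|c_0|^4+|c_1|^4+2|c_0|^2|c_1|^2\cos\omega t=1-2|c_0|^2|c_1|^2(1-\cos\omega t),
\]
where we used $(|c_0|^2+|c_1|^2)^2=1$. With $1-\cos\omega t=2\sin^2(\omega t/2)$ this gives
\[
K(t)=4|c_0|^2|c_1|^2\sin^2\frac{\omega t}{2}.
\]

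\textbf{Conclusion.} Finally, $C_{l_1}=2|c_0||c_1|$ from the qubit formula in Sec.~\ref{Pre}, so $4|c_0|^2|c_1|^2=C_{l_1}^2$, which yields the claim. Nondegeneracy, $\omega\neq0$, is what guarantees $b_1\neq0$ when coherence is present. The only mildly delicate point is handling the degenerate cases $c_0c_1=0$ (and $\omega=0$ in general), where $d_K=1$; the survival-amplitude route sidesteps any need for the explicit form of $\ket{k_1}$. That form serves only as a consistency check that $K=|\phi_1|^2$.
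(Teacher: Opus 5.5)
Your proof is correct, but its key step differs from the paper's. The paper also runs the Lanczos step. It then writes $\ket{K_1}$ explicitly, up to a global phase as $-c_1^*\ket{E_0}+c_0^*\ket{E_1}$, and evaluates $\phi_1(t)=\langle K_1|\psi(t)\rangle=c_0c_1\bigl(e^{-iE_1t}-e^{-iE_0t}\bigr)$ directly. You use the Lanczos step only to certify $d_K=2$, and then read off $K=1-|\phi_0(t)|^2$ from the survival amplitude $\langle\psi_0|e^{-iHt}|\psi_0\rangle$, so the form of $\ket{k_1}$ never enters.

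Your route has three advantages:
\begin{itemize}
\item It is insensitive to phase and sign conventions for $\ket{k_1}$. The paper writes $b_1=\omega\sqrt{p_0p_1}$, which is negative when $\omega<0$; you correctly use $|\omega|$.
\item It covers the eigenstate case $c_0c_1=0$, which the paper excludes by assuming a nontrivial superposition; its formula for $\ket{K_1}$ divides by $\sqrt{p_0p_1}$.
\item It shows that in any two-dimensional Krylov space the spread complexity is fixed by the return probability alone.
\end{itemize}

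The paper's route, in exchange, produces the explicit Krylov vector and the phase of $\phi_1(t)$. It is also the computation the paper extends to the qutrit case, where $K=|\phi_1|^2+2|\phi_2|^2$ is no longer determined by $|\phi_0|^2$ alone. The completeness relation $|\phi_0|^2+|\phi_1|^2=1$ you rely on is the same one the paper uses right after the proposition to obtain $\mathrm{IPR}=1-2K+2K^2$.
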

 \begin{proof}
     For a given non-degenerate Hamiltonian $H = \sum_{i=0}^{1} E_i \ket{E_i}\!\bra{E_i}$,
let the initial state be a non-trivial superposition of its energy eigenstates,
$\ket{\psi_0} \equiv \ket{K_0}= c_0 \ket{E_0} + c_1 \ket{E_1},|c_0|^2 + |c_1|^2 = 1,$
where $\{\ket{E_i}\}_{i=0}^{1}$ forms a complete orthonormal eigenbasis of $H$. The time-evolved state under unitary dynamics generated by $H$ is
 $\ket{\psi(t)}
= e^{-iHt}\ket{\psi_0}
= \sum_{i=0}^{1} c_{i} e^{-iE_i t}\ket{E_i}.$ We now construct the Krylov basis via the Lanczos algorithm, taking $\ket{K_0} = \ket{\psi_0}$ as the reference vector. The zeroth Lanczos coefficient is defined as $a_0 = \bra{K_0} H \ket{K_0}.$ Using the spectral decomposition of $H$, we obtain $a_0
= \sum_{i=0}^{1} |c_{i}|^2 E_i
= \langle H \rangle_{\psi_0},$ i.e., the energy expectation value in the initial state.
The Lanczos recursion relation gives $\ket{A_1}
= (H - a_0)\ket{K_0}
- b_0 \ket{K_{-1}}.$ Since by definition $\ket{K_{-1}} = 0$, this reduces to $\ket{A_1}
= (H - a_0)\ket{K_0}.$ Using $H\ket{K_0}
= \sum_{i=0}^{1} c_{i} E_i \ket{E_i},$ we obtain $\ket{A_1}
= \sum_{i=0}^{1} c_{i} (E_i - a_0)\ket{E_i}.$ Let's define $\epsilon_i := E_i - a_0,$
so that the unnormalised first Lanczos vector takes the form
\begin{equation}
\ket{A_1}
= \sum_{i=0}^{1} c_{i} \epsilon_i \ket{E_i}.
\end{equation}
The corresponding normalisation coefficient is $b_1
= \sqrt{\braket{A_1|A_1}}
= \left( \sum_{i=0}^{1} |c_{i}|^2 \epsilon_i^2 \right)^{1/2}.$ For the two-level case, introducing $p_i = |c_{i}|^2$, we have $b_1
= \left( p_0 \epsilon_0^2 + p_1 \epsilon_1^2 \right)^{1/2}.$
Using $a_0 = p_0 E_0 + p_1 E_1$ and $\omega := E_1 - E_0$, one finds
\begin{equation}
\epsilon_0 = - p_1 \omega,
\qquad
\epsilon_1 = p_0 \omega,
\end{equation}
which yields $b_1 = \omega \sqrt{p_0 p_1}.$
The first non-trivial Krylov vector is therefore,
\begin{equation}
\ket{K_1}
= \frac{1}{\omega \sqrt{p_0 p_1}}
\sum_{i=0}^{1} c_{i} \epsilon_i \ket{E_i}.
\end{equation}
Substituting the explicit expressions for $\epsilon_i$, we obtain
\begin{equation}
\begin{aligned}
\ket{K_1}
&= \frac{1}{\omega \sqrt{p_0 p_1}}
\left(
- c_0 p_1 \omega \ket{E_0}
+ c_1 p_0 \omega \ket{E_1}
\right),
\\[1ex]
&=
- \frac{c_0 p_1}{\sqrt{p_0 p_1}} \ket{E_0}
+ \frac{c_1 p_0}{\sqrt{p_0 p_1}} \ket{E_1}.
\end{aligned}
\end{equation}

Writing $c_0 = |c_0| e^{i\theta_0}$ and $c_1 = |c_1| e^{i\theta_1}$,
we can express $\ket{K_1}$ as $\ket{K_1}
=
- |c_1| e^{i(\theta_0 - \theta_1)} \ket{E_0}
+ |c_0| \ket{E_1}.$
Up to an irrelevant global phase, this is equivalent to $\ket{K_1}=- c_1^{*} \ket{E_0}
+ c_0^{*} \ket{E_1}.$ It is easy to check that $a_0+a_1=E_0+E_1$.


For completeness, we recall that the time-evolved state is, 
\begin{equation}
\ket{\psi(t)}
= c_0 e^{-iE_0 t} \ket{E_0}
+ c_1 e^{-iE_1 t} \ket{E_1}.
\end{equation}
\begin{equation}
\begin{aligned}
\implies \langle K_1 | \psi(t) \rangle
&= - c_0 c_1 e^{-iE_0 t}
+ c_0 c_1 e^{-iE_1 t} \\
&= c_0 c_1 \left( e^{-iE_1 t} - e^{-iE_0 t} \right) \\
&= -2i\, c_0 c_1
\, e^{-i\frac{(E_0 + E_1)t}{2}}
\sin\!\left( \frac{\omega t}{2} \right),
\end{aligned}
\end{equation}
With this we have everything to compute spread complexity.
\begin{equation}
\begin{aligned}
K(t)&=\sum^{1}_{n=0}n \left| \langle K_1 | \psi(t) \rangle \right|^2 \\
&= \left| \langle K_1 | \psi(t) \rangle \right|^2 \\
&= 4 |c_0|^2 |c_1|^2 
\sin^2\!\left( \frac{(E_1 - E_0)t}{2} \right) \\
&= C_{\ell_1}^{\,2} 
\sin^2\!\left( \frac{\omega t}{2} \right),
\end{aligned}
\end{equation}

where $C_{\ell_1} = 2 |c_0| |c_1|$ is the  $l_{1}$ coherence of the initial state in the eigenbasis of H.
\end{proof}
It is interesting to see that for above mentioned qubit case, IPR in Krylov basis has a nice relation with spread complexity.
\begin{equation}
    \begin{aligned}
        IPR &=  \left| \langle K_0 | \psi(t) \rangle \right|^4 + \left| \langle K_1 | \psi(t) \rangle \right|^4\\
        &=1-2 \left| \langle K_1 | \psi(t) \rangle \right|^2 +2  \left| \langle K_1 | \psi(t) \rangle \right|^4\\
        &=1-2 C_{\ell_1}^{\,2} 
\sin^2\!\left( \frac{\omega t}{2} \right) + 2C_{\ell_1}^{\,4} 
\sin^4\!\left( \frac{\omega t}{2} \right)\\&=1-2K(t)+2K^{2}(t)
    \end{aligned}
\end{equation}

We now consider a three-level system.

\begin{proposition}\label{qutritcoh}
 Consider a qutrit evolving under a non-degenerate Hamiltonian $H$ with energy eigenbasis $\{\ket{E_0},\ket{E_1},\ket{E_2}\}$ and an initial pure state expressed in this eigenbasis. Then the spread complexity $K(t)$ associated with the unitary evolution generated by $H$ can be written explicitly in terms of the pairwise coherences of the initial state in the energy eigenbasis. In particular, $K(t)$ depends only on the off-diagonal contributions corresponding to the pairs $(E_0,E_1)$, $(E_0,E_2)$, and $(E_1,E_2)$ i.e spread complexity in the three-level case is governed by the pairwise coherence structure of the initial state as follows,
    \begin{equation}
    \begin{aligned}
    K(t)&=
\frac{4}{b_1^{2}}
\left|
\sum_{i<j}
C_{ij}^{2}\,
\omega_{ij}\,
\sin\!\left( \frac{\omega_{ij} t}{2} \right)
e^{-i\frac{(E_i + E_j)t}{2}}
\right|^2
\\[1ex] 
&+ 8N^{2} C_{12} C_{23} C_{31}|\Omega (\omega_{12},\omega_{23})|^2,
     \end{aligned}
\end{equation}
where $\Omega(\omega_{12},\omega_{23})=\omega_{23} e^{-i\frac{(E_1 + E_2)t}{2}}
\sin\!\left( \frac{\omega_{12} t}{2} \right)
+
\omega_{12} e^{-i\frac{(E_2 + E_3)t}{2}}
\sin\!\left( \frac{\omega_{23} t}{2} \right)$
, $C_{ij} $ denotes pairwise coherence (as defined in section~\ref{Pre}) of the initial state and $\omega_{ij}$ is the energy difference of the $i^{th}$ and $j^{th}$ energy level i.e $\omega_{ij}=E_i - E_j$.
\end{proposition}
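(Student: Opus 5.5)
We follow the qubit computation, now with a three-dimensional Krylov space. We write $\ket{\psi_0}=\ket{K_0}=\sum_{i} c_i\ket{E_i}$ with $p_i=|c_i|^2$. We also take the generic case in which all $c_i\neq 0$, so that the Krylov space is the full three-dimensional space ($d_K=3$).

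\textbf{Krylov basis.} We run the Lanczos recursion in the energy eigenbasis.
\begin{itemize}
\item First, $a_0=\sum_i p_iE_i$ and $\ket{A_1}=\sum_i c_i(E_i-a_0)\ket{E_i}$.
\item Next, $b_1^2=\sum_i p_i(E_i-a_0)^2$. Writing the energy variance as a sum over pairs gives $b_1^2=\sum_{i<j}p_ip_j\omega_{ij}^2=\sum_{i<j}C_{ij}^2\omega_{ij}^2$. This is the first point where pairwise coherences enter.
\item We do not construct $\ket{K_2}$ through the recursion. Because $d_K=3$, we fix $\ket{K_2}$ up to a phase as the unit vector orthogonal to $\ket{K_0}$ and $\ket{K_1}$. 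In coordinates $(c_i)$ and $(c_i\epsilon_i)$, with $\epsilon_i=E_i-a_0$, it is proportional to the conjugated cross product $\bigl(\ket{K_0}\times\ket{K_1}\bigr)^*$.
\item Its components are then $\propto c_j^*c_k^*(\epsilon_k-\epsilon_j)=c_j^*c_k^*\,\omega_{kj}$, cyclically in $(i,j,k)$. The normalisation $N$ is fixed by $N^{-2}=\sum_{\text{cyc}}p_jp_k\omega_{jk}^2\cdot(\ldots)/b_1^2$, computed directly.
\end{itemize}

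\textbf{Amplitudes and complexity.} The spread complexity is $K(t)=|\phi_1|^2+2|\phi_2|^2$, with $\phi_n=\braket{K_n|\psi(t)}$.
\begin{itemize}
\item For $\phi_1$ we have
\[
\phi_1=\frac{1}{b_1}\sum_i p_i\epsilon_i e^{-iE_it}.
\]
Since $\sum_i p_i\epsilon_i=0$, this can be rewritten as a sum over pairs: $\sum_i p_i\epsilon_i e^{-iE_it}=\sum_{i<j}p_ip_j\omega_{ji}\,(e^{-iE_jt}-e^{-iE_it})$.
\item Each difference of exponentials satisfies $e^{-iE_jt}-e^{-iE_it}=-2i\,e^{-i(E_i+E_j)t/2}\sin(\omega_{ji}t/2)$. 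This turns $|\phi_1|^2$ into the first term of the statement, $\frac{4}{b_1^2}\bigl|\sum_{i<j}C_{ij}^2\omega_{ij}\sin(\omega_{ij}t/2)e^{-i(E_i+E_j)t/2}\bigr|^2$.
\item For $\phi_2$, the Krylov vector has components $\propto c_j^*c_k^*\omega_{kj}$. Hence $\phi_2\propto N\,c_0c_1c_2\sum_{\text{cyc}}\omega_{jk}e^{-iE_it}$. The prefactor has modulus $|c_0c_1c_2|$, and $|c_0c_1c_2|^2=C_{12}C_{23}C_{31}$ up to labelling, which explains the triple product.
\item The cyclic sum vanishes at $t=0$, because $\sum_{\text{cyc}}\omega_{jk}=0$. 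We therefore regroup it into two sine factors using half-angle identities. Eliminating $\omega_{31}=-(\omega_{12}+\omega_{23})$ should give $\Omega(\omega_{12},\omega_{23})$ up to an overall phase, after which $2|\phi_2|^2$ yields $8N^2C_{12}C_{23}C_{31}|\Omega|^2$.
\end{itemize}

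\textbf{Main obstacle.} The hard step is the last one: reorganising the cyclic exponential sum into exactly the two-term form $\Omega$, and keeping the normalisation constant $N$ and the factor $8$ consistent. The plan for this step is as follows.
\begin{itemize}
\item Factor out a common phase, e.g. $e^{-iE_2t}$ in the middle level, and write the sum as $\omega_{23}(e^{-iE_1t}-e^{-iE_2t})+\omega_{12}(e^{-iE_3t}-e^{-iE_2t})$. This uses $\omega_{31}=-(\omega_{12}+\omega_{23})$ and produces exactly the two sine terms with midpoint phases, modulo the indexing of levels in the statement, which labels them $1,2,3$.
\item Check the prefactors (signs and $2i$) against the qubit case, which must be recovered when one $c_i\to0$.
\item Leave $N$ defined implicitly as the normalisation of $\ket{K_2}$.
\end{itemize}
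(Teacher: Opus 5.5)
Your route is the paper's route. You use the same Lanczos data, with $b_1^2=\sum_{i<j}C_{ij}^2\omega_{ij}^2$ obtained as the energy variance, and the same pairwise regrouping of $\phi_1$. You build $\ket{K_2}$ by the same cross product; your conjugated version is a harmless extension to complex $c_i$, which the paper avoids by taking the $c_i$ real, at no cost since their phases can be absorbed into the $\ket{E_i}$. You then regroup the cyclic sum in $\phi_2$ exactly as the paper does. Everything through $|\phi_1|^2$ is correct.

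The gap is the final step, which you rightly single out and then assert will reproduce $\Omega$ ``up to an overall phase''. It does not. Applying $e^{-iE_it}-e^{-iE_jt}=-2i\,e^{-i(E_i+E_j)t/2}\sin(\omega_{ij}t/2)$ to both differences gives
\[
\omega_{23}\bigl(e^{-iE_1t}-e^{-iE_2t}\bigr)+\omega_{12}\bigl(e^{-iE_3t}-e^{-iE_2t}\bigr)
=-2i\Bigl[\omega_{23}\,e^{-i(E_1+E_2)t/2}\sin(\omega_{12}t/2)-\omega_{12}\,e^{-i(E_2+E_3)t/2}\sin(\omega_{23}t/2)\Bigr],
\]
because the second difference produces $\sin(\omega_{32}t/2)=-\sin(\omega_{23}t/2)$. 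This is a \emph{relative} sign between the two terms, not an overall phase. It changes $|\Omega|^2$ by $4\,\omega_{12}\omega_{23}\sin(\omega_{12}t/2)\sin(\omega_{23}t/2)\cos(\omega_{13}t/2)$, which is generically nonzero. So the displayed formula, with the plus sign in $\Omega$, cannot be derived: as stated it is false. The paper's proof performs exactly your regrouping and makes the same sign slip in that line, so following it faithfully only reproduces the error.

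Your proposed safeguard would not catch this. In the qubit limit $c_i\to 0$ the whole $\phi_2$ contribution is killed by the factor $C_{12}C_{23}C_{31}$, so that check is blind to the sign inside $\Omega$. The informative check is short time:
\begin{itemize}
\item Tridiagonality of $H$ in the Krylov basis gives $|\phi_2(t)|=\frac{1}{2}b_1b_2t^2+O(t^3)$.
\item The cyclic sum $\omega_{23}e^{-iE_1t}+\omega_{31}e^{-iE_2t}+\omega_{12}e^{-iE_3t}$ vanishes through first order in $t$, since both $\sum_{\mathrm{cyc}}\omega_{jk}$ and $\sum_{\mathrm{cyc}}\omega_{jk}E_i$ are zero. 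The corrected bracket is consistent with this.
\item The stated $\Omega$, by contrast, equals $\omega_{12}\omega_{23}t+O(t^2)$. The paper's later claim $|\phi_2|^2\propto t^2$ is a symptom of the same error.
\end{itemize}
What your argument actually establishes is the formula with $\Omega$ replaced by the bracketed expression above, carrying the minus sign.

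You can also close your unfinished normalisation instead of writing ``$(\ldots)$''. The vector with components $c_j^*c_k^*\omega_{kj}$ has squared norm $\sum_{i<j}p_ip_j\omega_{ij}^2=b_1^2$, so $N=1/b_1$.
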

 
\begin{proof}
We consider a non-degenerate qutrit Hamiltonian $H = \sum_{i=0}^{2} E_i \ket{E_i}\!\bra{E_i},$
with $\{\ket{E_i}\}_{i=0}^{2}$ forming a complete orthonormal eigenbasis.  Let the initial pure state be written in this eigenbasis as $\ket{\psi_0} \equiv \ket{K_0}
= \sum_{i=0}^{2} c_{i} \ket{E_i}$ with $\sum_{i=0}^{2} |c_{i}|^2 = 1$. For the sake of simplicity, we consider the coefficients $c_i$'s to be real. Under the unitary dynamics generated by $H$, the time-evolved state is $\ket{\psi(t)}
= e^{-iHt} \ket{\psi_0}
= \sum_{i=0}^{2} c_{i} e^{-iE_i t} \ket{E_i}.$
We now construct the Krylov basis via the Lanczos algorithm with
$\ket{K_0} = \ket{\psi_0}$ as the reference vector. The zeroth Lanczos coefficient is $a_0
= \bra{K_0} H \ket{K_0}
= \langle H \rangle
= \sum_{i=0}^{2} |c_{i}|^2 E_i,$
namely, the expectation value of the Hamiltonian in the initial state. The first non-trivial Krylov vector is obtained from $\ket{K_1}
= \frac{1}{b_1} (H - a_0)\ket{K_0}$,
where the normalization coefficient $b_1$ is,
\begin{equation}
b_1^2
= \braket{A_1|A_1}
= \bra{K_0}(H - a_0)^2\ket{K_0}
= \bra{K_0}(H - \langle H \rangle)^2\ket{K_0}.
\end{equation}

Thus, $b_1^2
= \langle H^2 \rangle - \langle H \rangle^2
= (\Delta H)^2,$
which is precisely the energy variance in the initial state. Explicitly,
\begin{equation}
(\Delta H)^2
=
\sum_{i=0}^{2} |c_{i}|^2 E_i^2
-
\left( \sum_{i=0}^{2} |c_{i}|^2 E_i \right)
\left( \sum_{j=0}^{2} |c_{j}|^2 E_j \right),
\end{equation}
or equivalently,
\begin{equation}
(\Delta H)^2
=
\sum_{i=0}^{2} |c_{i}|^2 E_i^2
-
\sum_{i,j=0}^{2} |c_{i}|^2 |c_{j}|^2 E_i E_j.
\end{equation}
We now reorganize the double sum appearing in the variance.  First, we decompose the second term as,
\begin{equation}
\sum_{i,j} |c_{i}|^2 |c_{j}|^2 E_i E_j
=
\sum_i |c_{i}|^4 E_i^2
+
2 \sum_{i<j} |c_{i}|^2 |c_{j}|^2 E_i E_j,
\end{equation}
Substituting this into the expression for $(\Delta H)^2$, we get
\begin{equation}
(\Delta H)^2
=
\sum_i |c_{i}|^2 E_i^2
-
\sum_i |c_{i}|^4 E_i^2
-
2 \sum_{i<j} |c_{i}|^2 |c_{j}|^2 E_i E_j.
\end{equation}

From now onwards, we shall write  $p_i := |c_{i}|^2$. With this, we finally have,
\begin{equation}
(\Delta H)^2
=
\sum_i p_i (1 - p_i) E_i^2
-
2 \sum_{i<j} p_i p_j E_i E_j,
\end{equation}
with the normalization condition $\sum_i p_i = 1 \implies 1 - p_i = \sum_{j \neq i} p_j\implies p_i (1 - p_i)
= p_i \sum_{j \neq i} p_j$
so that, $\sum_i p_i (1 - p_i) E_i^2
=
\sum_i \sum_{j \neq i} p_i p_j E_i^2.$ Symmetrizing this expression yields, $\sum_i \sum_{j \neq i} p_i p_j E_i^2
=
\sum_{i<j} p_i p_j (E_i^2 + E_j^2)$. Consequently finally we can write $(\Delta H)^2$ as,
\begin{equation}
\begin{aligned}
   (\Delta H)^2&= \sum_{i<j} p_i p_j (E_i^2 + E_j^2)-2 \sum_{i<j} p_i p_j E_i E_j,
   \\[1ex]
   &=\sum_{i<j} p_i p_j (E_i - E_j)^2.
 \end{aligned}
\end{equation}
Recalling that $b_1^2 = (\Delta H)^2$, and defining the pairwise coherences
\(
C_{ij} := \sqrt{p_i p_j} = |c_{i} c_{j}|
\)
and energy gaps
\(
\omega_{ij} := E_i - E_j,
\)
we finally obtain $b_1^2
=
\sum_{i<j} C_{ij}^{\,2}\, \omega_{ij}^{\,2}.$ Recall that the first non-trivial Krylov vector is given by
\begin{equation}
\ket{K_1}
=
\frac{1}{b_1}(H-a_0)\ket{K_0}
=
\frac{1}{b_1}\sum_i c_{i} (E_i-a_0)\ket{E_i},
\end{equation}
while the time-evolved state reads $\ket{\psi(t)}=\sum_i c_{i} e^{-iE_i t}\ket{E_i}.$
We define the spread complexity (for the three-level case) as $K(t)=|\phi_1(t)|^2 + 2|\phi_2(t)|^2$ with $\phi_i(t)=\braket{K_i|\psi(t)}$.
Using the above expressions, we obtain
\begin{equation}
\phi_1(t)
=
\frac{1}{b_1}
\sum_i |c_{i}|^2 (E_i-a_0)e^{-iE_i t}.
\end{equation}
Here,\begin{equation}
\begin{aligned}
E_i-a_0
&=
E_i-\sum_j |c_{j}|^2 E_j \\
&=
E_i-|c_{i}|^2E_i-\sum_{j\neq i}|c_{j}|^2E_j \\
&=
(1-|c_{i}|^2)E_i-\sum_{j\neq i}|c_{j}|^2E_j \\
&=
\left(\sum_{j\neq i}|c_{j}|^2\right)E_i
-\sum_{j\neq i}|c_{j}|^2E_j \\
&=
\sum_{j\neq i}|c_{j}|^2(E_i-E_j).
\end{aligned}
\end{equation}

Substituting $E_i-a_0$ back to expression of $\phi_1(t)$, we get
\begin{equation}
\phi_1(t)
=
\frac{1}{b_1}
\sum_i |c_{i}|^2
\left(
\sum_{j\neq i}|c_{j}|^2 (E_i-E_j)
\right)
e^{-iE_i t},
\end{equation}
or equivalently, $\phi_1(t)
=
\frac{1}{b_1}
\sum_{i}\sum_{j\neq i}
|c_{i}|^2|c_{j}|^2 (E_i-E_j)
e^{-iE_i t}.$

Each unordered pair $(i,j)$ appears twice in the above double sum. Grouping the contributions for a fixed pair $i<j$, we obtain
\begin{equation}
\begin{aligned}
&|c_{i}|^2 |c_{j}|^2 (E_i - E_j)e^{-iE_i t}
+ |c_{j}|^2 |c_{i}|^2 (E_j - E_i)e^{-iE_j t} \\
&=
|c_{i}|^2 |c_{j}|^2 
\left[
(E_i - E_j)e^{-iE_i t}
+ (E_j - E_i)e^{-iE_j t}
\right].\\
&=|c_{i}|^2 |c_{j}|^2 \, \omega_{ij}
\left( e^{-iE_i t} - e^{-iE_j t} \right) \\
&=
-2i |c_{i}|^2 |c_{j}|^2 \, \omega_{ij}
\, e^{-i\frac{(E_i+E_j)t}{2}}
\sin\!\left( \frac{\omega_{ij} t}{2} \right).
\end{aligned}
\end{equation}

Summing over all unordered pairs, we therefore arrive at,
\begin{equation}
\phi_1(t)
=
-\frac{2i}{b_1}
\sum_{i<j}
|c_{i}|^2 |c_{j}|^2 \,
\omega_{ij}
\, e^{-i\frac{(E_i+E_j)t}{2}}
\sin\!\left( \frac{\omega_{ij} t}{2} \right).
\end{equation}

Defining the pairwise coherence amplitudes
\(
C_{ij} := |c_{i}||c_{j}|,
\)
this expression can be written compactly as,
\begin{equation}
\phi_1(t)
=
-\frac{2i}{b_1}
\sum_{i<j}
C_{ij}^{\,2} \,
\omega_{ij}
\, e^{-i\frac{(E_i+E_j)t}{2}}
\sin\!\left( \frac{\omega_{ij} t}{2} \right).
\end{equation}

\begin{equation}
\implies |\phi_1(t)|^2
=
\frac{4}{b_1^2}
\left|
\sum_{i<j}
C_{ij}^{\,2} \,
\omega_{ij}
\sin\!\left( \frac{\omega_{ij} t}{2} \right)
e^{-i\frac{(E_i+E_j)t}{2}}
\right|^2.
\end{equation}

We will now determine the second Krylov amplitude. Let the third Krylov vector be expanded in the energy eigenbasis as $\ket{K_2} = \sum_{i=0}^{2} d_i \ket{E_i}.$  Consider $d_i$'s to be real. Orthogonality with $\ket{K_0}$ implies $\braket{K_0|K_2}=0
\quad \Rightarrow \sum_i c_i d_i = 0$,
while orthogonality with $\ket{K_1}$ gives $\braket{K_1|K_2}=0
\quad \Rightarrow 
\frac{1}{b_1}\sum_i c_i(E_i-a_0)d_i = 0,$
or equivalently, $\sum_i c_i(E_i-a_0)d_i = 0.$
Introduce vectors
\[
\vec d=(d_1,d_2,d_3), 
\vec v_1=(c_1,c_2,c_3),
\vec v_2=(c_1\Gamma_1,\,c_2\Gamma_2,\,c_3\Gamma_3),
\]
with $\Gamma_i:=E_i-a_0$. Therefore, the orthogonality conditions in terms of above introduced vectors become $\vec v_1\!\cdot\!\vec d=0,
\quad
\vec v_2\!\cdot\!\vec d=0.$
Hence, $\vec d$ must be orthogonal to both $\vec v_1$ and $\vec v_2$.  
In three dimensions this determines $\vec d$ (up to normalization) as $\vec d \propto \vec v_1 \times \vec v_2 .$ Evaluating the components explicitly, $d_1=c_2c_3(E_3-E_2)$, $d_2=
c_3c_1(E_1-E_3)$ , $d_3=c_1c_2(E_2-E_1).$ Thus,
\begin{equation}
\vec d
=
N\!\left(
c_2c_3(E_3-E_2),
\;
c_3c_1(E_1-E_3),
\;
c_1c_2(E_2-E_1)
\right),
\end{equation}
where $N$ is a normalization constant.
Recalling that $\ket{\psi(t)}=\sum_i c_i e^{-iE_i t}\ket{E_i},$
the second Krylov amplitude becomes $\phi_2(t)
=\braket{K_2|\psi(t)}
=\sum_i d_ic_i e^{-iE_i t}.$
Substituting the explicit form of $d_i$ and using $\omega_{ij}:=E_i-E_j$, we obtain,

\begin{equation}
\phi_2(t)
=
- N c_1 c_2 c_3
\!\left[
\omega_{23}e^{-iE_1 t}
+
\omega_{31}e^{-iE_2 t}
+
\omega_{12}e^{-iE_3 t}
\right].
\end{equation}

The term in bracket can be simplified further using $e^{-iE_i t}-e^{-iE_j t}=-2i\,e^{-i\frac{(E_i+E_j)t}{2}}
\sin\!\left(\frac{\omega_{ij}t}{2}\right),$
\begin{equation}
\begin{aligned}
&\omega_{23} e^{-iE_1 t}
+ \omega_{31} e^{-iE_2 t}
+ \omega_{12} e^{-iE_3 t}\\
&=
\omega_{23}\!\left(e^{-iE_1 t}-e^{-iE_2 t}\right)
+\omega_{12}\!\left(e^{-iE_3 t}-e^{-iE_2 t}\right).\\
&=\omega_{23}
e^{-i\frac{(E_1+E_2)t}{2}}
\sin\!\left(\frac{\omega_{12}t}{2}\right)
+
\omega_{12}
e^{-i\frac{(E_2+E_3)t}{2}}
\sin\!\left(\frac{\omega_{23}t}{2}\right)\\
&\equiv \Omega (\omega_{12},\omega_{23})
\end{aligned}
\end{equation}
 where in the last line we have defined the resultant expression as $\Omega (\omega_{12},\omega_{23})$. With this we have,
\begin{equation}
\phi_2(t)=
-2i\,N c_1 c_2 c_3 \Omega (\omega_{12},\omega_{23})
\end{equation}
 $\implies |\phi_2(t)|^2=
4N^2 |c_1|^2|c_2|^2|c_3|^2 |\Omega (\omega_{12},\omega_{23})|^2$. In terms of pairwise coherences $C_{ij}=|c_i||c_j|$, this can be written compactly as,
\begin{equation}
|\phi_2(t)|^2
=
4N^2 C_{12}C_{23}C_{31}
|\Omega (\omega_{12},\omega_{23})|^2.
\end{equation}

Collecting the contributions, the spread complexity
\(
K(t)=|\phi_1(t)|^2+2|\phi_2(t)|^2
\)
takes the form
\begin{equation}
\begin{aligned}
K(t)
&=
\frac{4}{b_1^2}
\left|
\sum_{i<j}
C_{ij}^{2}\,
\omega_{ij}\,
\sin\!\left(\frac{\omega_{ij} t}{2}\right)
e^{-i\frac{(E_i+E_j)t}{2}}
\right|^2
\\[6pt]
&\quad
+
8N^2 C_{12}C_{23}C_{31}
|\Omega (\omega_{12},\omega_{23})|^2.
\end{aligned}
\end{equation}
\end{proof}

   Note that $K(t)$ for a qutrit system is composed of sine functions, hence bounded and oscillatory.

It is instructive to note from Proposition~\ref{qutritcoh} that, for a qutrit system, the resulting expression for the spread complexity $K(t)$ depends explicitly on both pairwise and triple coherences of the initial state in the energy eigenbasis. In particular, the contributions involving $C_{ij}$ capture the pairwise coherence between the levels $(i,j)$, while the terms proportional to $C_{12}C_{23}C_{31}$ arise from the simultaneous presence of coherence among all three energy levels. Furthermore, the expression obtained in Proposition~\ref{qutritcoh} also recovers the expected behavior in the fully degenerate case. Indeed, if the Hamiltonian is completely degenerate, then $\omega_{ij}=E_i-E_j=0$ for all pairs $\{i,j\}$. Consequently, all oscillatory terms appearing in the expression of $K(t)$ vanish, implying that the spread complexity identically reduces to zero. This is consistent with the fact that in the absence of energy gaps, the time evolution becomes trivial up to a global phase, and therefore no Krylov spreading occurs. Therefore, energy non-degeneracy is necessary for Krylov growth.\\
Let us now examine the short-time behaviour of the Proposition~\ref{qutritcoh}.

In the short-time limit $t \to 0$, we use the approximations $\sin\!\left(\frac{\omega_{ij}t}{2}\right) \approx \frac{\omega_{ij}t}{2}$ , $e^{-i\frac{(E_i+E_j)t}{2}} \approx 1$.
Substituting these into the expression for $\phi_1(t)$ obtained in Proposition~\ref{qutritcoh}, we obtain
\begin{equation}
\begin{aligned}
|\phi_1(t)|^2 
&\approx \frac{4}{b_1^2}
\left|
\sum_{i<j}
C_{ij}^{2}\omega_{ij}
\left(\frac{\omega_{ij}t}{2}\right)
\right|^2 \\
&=
\frac{4}{b_1^2}
\left|
\frac{t}{2}
\sum_{i<j} C_{ij}^{2}\omega_{ij}^{2}
\right|^2 .
\end{aligned}
\end{equation}
Using previously obtained expression for $b^{2}_1$ i.e 
\(
b_1^2=\sum_{i<j}C_{ij}^{2}\omega_{ij}^{2},
\)
this simplifies to $|\phi_1(t)|^2 \approx b_1^2 t^2 .$
Similarly, expanding the expression for $\phi_2(t)$ in the short-time limit yields
\begin{equation}
|\phi_2(t)|^2
\approx
32 N^2 C_{12}C_{23}C_{31}
\,\omega_{12}^2 \omega_{23}^2 \, t^2 .
\end{equation}

Thus, both $|\phi_1(t)|^2$ and $|\phi_2(t)|^2$ scale quadratically in time. Since the spread complexity is given by $K(t)=|\phi_1(t)|^2+2|\phi_2(t)|^2,$
it follows that $K(t)\propto t^2$, which is usually the standard case for spread complexity at small times~\cite{G2022}. 

\begin{remark}
If we consider the case where one of the amplitudes of the initial state vanishes, for instance, that $c_3=0$. In this case, the pairwise coherences involving the third level vanish, namely $C_{23}=0$ and $C_{31}=0$. Consequently, the expression for $\phi_2(t)$ vanishes identically, i.e.\ $\phi_2(t)=0$. Therefore, the spread complexity reduces to the contribution coming solely from $\phi_1(t)$, and the resulting expression coincides with that obtained for the qubit case.
\end{remark}

\section{Conclusion}\label{Conclusion}
The growth of spread complexity in quantum dynamics reflects the progressive spreading of an initial state across Hilbert space. 
This spreading arises from the repeated action of the Hamiltonian, which generates higher Krylov modes and leads to an increase in spread complexity. Such dynamical delocalization naturally accompanies the generation of quantum correlations. For bipartite systems, we have shown that the entanglement of the evolved state is upper bounded in terms of the entanglement of the Krylov basis vectors and the spread complexity. In the case of multipartite systems, both lower and upper bounds are obtained for the inverse participation ratio, a quantifier of the delocalization of a quantum state in the Krylov basis, in terms of the geometric measures.  As it is well known that the spread complexity depends on the initial state and the Hamiltonian, we find analytic relations for qubit and qutrit systems 
that relate the initial state’s quantum coherence in the energy eigenbasis and the spread complexity of the time-evolved state for any given arbitrary Hamiltonian. 
\\

\begin{acknowledgments}
SC and US acknowledge partial support from the Department of Science and Technology,
Government of India, through the QuEST grant with Grant
No.~DST/ICPS/QUST/Theme-3/2019/120 via I-HUB QTF of
IISER Pune, India. US also acknowledges financial support from the
Anusandhan National Research Foundation (ANRF), Government of India,
under Grant No.~ANRF/ARG/2025/004617/PS.
\end{acknowledgments}
\appendix

\section{Proof of $H(\{ |c_{n}|^2\}) \le (1+K) \log (1+K)- K \log K$}\label{shannon}
Spread complexity of a pure quantum state $\ket{\psi}=\sum_{n}c_{n}(t)\ket{k_n}$ is $K(t)=\sum_{n}n|c_{n}(t)|^2 $. At a given t, we have this fixed mean(spread complexity) as a constraint along with the normalisation constraint. Shannon entropy at a given fixed t is given by the distribution $\{|c_{n}|^2\}$ is $\sum_{n} p_{n} \log p_{n}$, where $p_{n}=|c_{n}|^2$. The aim is to maximise this Shannon entropy given the two constraints mentioned above.
Lagrangian of the problem becomes,
\begin{equation}
  \mathcal{L}=\sum_{n} p_{n} \log p_{n} + \alpha (\sum_{n}p_n -1)+\beta(\sum_{n}np_{n}-K)  
\end{equation}
Differentiating w.r.t $p_{n}$ and putting $ \frac{\partial  \mathcal{L}}{\partial p_{n}}=0$, 
\begin{equation}
    \frac{\partial  \mathcal{L}}{\partial p_{n}}=-\sum_{n}(\log p_{n}+1)+\alpha \sum_{n} 1 +\beta\sum_{n}n=0
\end{equation}
\begin{equation}
-(\log p_n + 1) + \alpha + \beta n = 0 \qquad \forall n
\end{equation}
\begin{equation}
-\log p_n - 1 + \alpha + \beta n = 0
\end{equation}
\begin{equation}
\log p_n = \alpha + \beta n - 1 = \alpha - 1 + \beta n
\end{equation}
Let us denote $\alpha-1=C' \implies p_{n}=C e^{\beta n}$, where $C=e^{\alpha-1}$. Now $\beta$ has to be less than zero, otherwise $\sum_{n}p_{n}$ will diverge.  Then defining $e^{\beta}=r$ with $0<r<1$, we have $p_{n}=C r^n$ . Normalisation consdition, $\sum_{n}p_{n}=\sum_{n}C r^n=C \frac{1}{1-r}=1 \implies C=1-r \implies p_{n}=(1-r)r^n$ and $K=\sum_{n}n(1-r)r^n=\frac{r}{1-r}$ using geometric sum $\sum_{n}n r^n= \frac{r}{(1-r)^2}$. Writing r in terms of K, we have $p_{n}=\frac{1}{1+K}(\frac{K}{1+K})^n$. With this final form of $p_{n}$, shannon entropy 
\begin{equation}
H\left(\{p_{n}\}\right)
=
-\sum_{n}
\frac{1}{1+K}
\left(\frac{K}{1+K}\right)^n
\log
\left[
\frac{1}{1+K}
\left(\frac{K}{1+K}\right)^n
\right]
\end{equation}
After few steps of simplification, we can easily get $H\left(\{p_{n}\}\right)= (1+K)\log(1+K)-K \log K$, this is basically the maximum Shannon entropy with fixed mean at given t. Therefore, in general $H\left(\{p_{n}\}\right)\le (1+K)\log(1+K)-K \log K$.\\

\section{GM of superoposition of Krylov basis vectors}\label{A}
Consider that a multipartite pure state is written in the Krylov basis,  $|\psi\rangle = \sum_{n} c_n |k_n\rangle=|\psi\rangle = c_m |k_m\rangle + \sum_{n\neq m} c_n |k_n\rangle$
\[
\implies |k_m\rangle = \frac{1}{c_m} |\psi\rangle - \frac{1}{c_m} \sum_{n\neq m} c_n |k_n\rangle
\]
We then define, $|M_m\rangle := \sum_{n\neq m} c_n |k_n\rangle$ and $N_m := \|M_m\| = \sqrt{1-|c_m|^2}$. Let $\alpha_m = \frac{1}{c_m}$, $\beta_m = -\frac{N_m}{c_m}, $ $|\tilde{M}_m\rangle = \frac{|M_m\rangle}{N_m}$. 

therefore, 
\begin{equation}
|k_m\rangle = \alpha_m |\psi\rangle + \beta_m |\tilde{M}_m\rangle
\end{equation}

Applying the lower bound for the superposition of two states gives
\begin{equation}
\mathcal{G}_{|k_m\rangle}
\ge
1 -
\left(
|\alpha_m|\sqrt{1-\mathcal{G}_{\psi}}
+
|\beta_m|\sqrt{1-\mathcal{G}_{\tilde{M}_m}}
\right)^2
  \end{equation}
Using $\sqrt{1-\mathcal{G}_{\tilde{M}_m}} \le 1$, we have 
\begin{equation}
\begin{aligned}
\mathcal{G}_{|k_m\rangle}
&\ge
1 -
\left(
|\alpha_m|\sqrt{1-\mathcal{G}_{\psi}}
+
|\beta_m|
\right)^2
\\[1ex]
&\implies |\alpha_m|\sqrt{1-\mathcal{G}_{\psi}} + |\beta_m|
\ge
\sqrt{1-\mathcal{G}_{|k_m\rangle}}
\\[1ex]
&\implies \sqrt{1-\mathcal{G}_{\psi}}
\ge
\frac{1}{|\alpha_m|}
\left(
\sqrt{1-\mathcal{G}_{|k_m\rangle}))} - |\beta_m|
\right)
\\[1ex]
&\implies \mathcal{G}_{\psi} \le 1 -
\left[
\frac{1}{|\alpha_m|}
\left(
\sqrt{1-\mathcal{G}_{|k_m\rangle}} - |\beta_m|
\right)
\right]^2
\\[1ex]
&\implies \mathcal{G}_{\psi}
\le
1 -
\left[
|c_m|\sqrt{1-\mathcal{G}_{|k_m\rangle}}
-
\sqrt{1-|c_m|^2}
\right]^2
\end{aligned}
\end{equation}

So just like this $|\psi\rangle = c_m |k_m\rangle + \sum_{n\neq m} c_n |k_n\rangle$, we can do this for all $m$ and find for which $m$ we get the minimum. That minimum value would be the upper bound i.e $\mathcal{G}_{\psi}=
\min \{A_1, A_2, \ldots, A_n\}$, where $A_m=1-\left[
|c_m|\sqrt{1-\mathcal{G}_{|k_m\rangle}}
-
\sqrt{1-|c_m|^2}
\right]^2$.\\

\section{Bounds on variance of spread complexity}\label{var}

We consider the Krylov decomposition $|\psi(t)\rangle = \sum_{n=0}^{D-1} c_n(t)\,|K_n\rangle$ with probabilities $p_n(t)=|c_n(t)|^2$ satisfying $\sum_{n} p_n = 1.$ The Krylov operator is defined as,
\begin{equation}
\hat K = \sum_{n=0}^{D-1} n\,|K_n\rangle\langle K_n|.
\end{equation}
Its expectation value and variance are,
\begin{equation}
C(t)=\langle \hat K\rangle = \sum_n n p_n,
\qquad
(\Delta K)^2 = \langle \hat K^2\rangle - \langle \hat K\rangle^2.
\end{equation}

Since $\langle \hat K^2\rangle = \sum_n n^2 p_n,$
the variance can be written as,
\begin{equation}
(\Delta K)^2 = \sum_n n^2 p_n - \left(\sum_n n p_n\right)^2.
\end{equation}

Expanding the second term and symmetrizing,
\begin{equation}
\begin{aligned}
 (\Delta K)^2
&=
\frac12 \sum_{m,n} (m^2+n^2-2mn)\,p_m p_n   \\
&=\frac12 \sum_{m,n} p_m p_n (m-n)^2.
\end{aligned}
\end{equation}

Let us now find lower bound on this variance. Note that since $(m-n)^2 \ge 1$, for $m\neq n$, we obtain
\begin{align}
(\Delta K)^2
&\ge \frac12 \sum_{m\neq n} p_m p_n \\
&= \frac12 \left(1 - \sum_n p_n^2\right)\\
&=\frac12\left(1-\mathrm{IPR}\right).
\end{align}

For the upper bound, since $0 \le n \le D-1$, we have $|m-n|\le D-1$, implying $(m-n)^2 \le (D-1)^2.$

Using the variance identity,
\begin{align}
(\Delta K)^2
&\le \frac12 (D-1)^2 \sum_{m\neq n} p_m p_n \\
&= \frac12 (D-1)^2 \left(1 - \sum_n p_n^2\right)\\
&=\frac{(D-1)^2}{2}\left(1-\mathrm{IPR}\right).
\end{align}

Combining both the bounds, we have 
\begin{equation}
\frac12\left(1-\mathrm{IPR}\right)
\le (\Delta K)^2
\le \frac{(D-1)^2}{2}\left(1-\mathrm{IPR}\right).
\end{equation}
{These obtained upper and lower bounds on variance can be further reexpressed in terms of multipartite correlations using Proposition~\ref{multiparty entk}.}

\vspace{2cm}

\bibliographystyle{jsty3-author}
\bibliography{k}

\end{document}